\spnewtheorem*{example-rs}{Example}{\it}{\rm}
\definecolor{fullred}{rgb}{0.85,.0,.1} 
\definecolor{orange}{rgb}{1,0.5,0}
\newcommand{\goth}[1]{\mathfrak{#1}}
\newcommand{\manifold}{\mathcal{M}}
\newcommand{\ball}[2]{\mathcal{B}(#1, #2)}
\newcommand{\horzdist}{\mathcal{H}}
\renewcommand{\natural}{\mathbb{N}}
\newcommand{\real}{\mathbb{R}}
\newcommand{\kdtree}{\mbox{$k$-d} tree\xspace}
\newcommand{\kdtrees}{\mbox{$k$-d} trees\xspace}
\newcommand{\range}{\text{\texttt{rng}$\,$}}
\title{Efficient Nearest-Neighbor Search for Dynamical Systems with Nonholonomic Constraints\vspace{-3mm}}
\author{Valerio Varricchio \and Brian Paden \and Dmitry Yershov \and Emilio Frazzoli\vspace{-2mm}}
\institute{Massachusetts Institute of Technology}
\date{\today }
\begin{document}
\graphicspath{{./fig/}}
\maketitle
\thispagestyle{plain} 
\vspace{-6mm}
 \begin{abstract}
Nearest-neighbor search dominates the asymptotic complexity of sampling-based motion planning algorithms and is often addressed with \kdtree data structures.
While it is generally believed that the expected complexity of nearest-neighbor queries is $O(\log(N))$  in the size of the tree, this paper reveals that when a classic \kdtree approach is used with sub-Riemannian metrics, the expected query complexity is in fact $\Theta(N^p \log(N))$ for a number $p \in [0, 1)$ determined by the degree of nonholonomy of the system.
These metrics arise naturally in nonholonomic mechanical systems, including classic wheeled robot models.
To address this negative result, we propose novel \kdtree build and query strategies tailored to sub-Riemannian metrics and demonstrate significant improvements in the running time of nearest-neighbor search queries.
\end{abstract}

\section{Introduction}
{
Sampling-Based algorithms such as Probabilistic Roadmaps (PRM) \cite{kavraki1996probabilistic}, Rapidly exploring Random Trees (RRT) \cite{lavalle1998rapidly} and their asymptotically optimal variants (PRM$^*$, RRT$^*$) \cite{karaman2011sampling} are widely used in motion planning. These algorithms build a random graph of motions between points on the robot's configuration manifold.

During the graph expansion, nearest-neighbor search is used to limit the computation to regions of the graph close to the new configurations and it is shown to dominate the asymptotic complexity of randomized planners.
The notion of closeness appropriate for motion planning is induced by the length of the shortest paths between configurations, or in general, by the minimum cost of controlling a system between states.

As compared to exhaustive linear search, efficient algorithms with reduced complexity have been studied in computational geometry and their use in motion planning has been highlighted as a factor of dramatic performance improvement \cite{lavalle2001randomized}. Among a variety of approaches, \kdtrees \cite{bentley1975multidimensional} are ideal due to their remarkable efficiency in low-dimensional spaces, typical of motion planning problems.

Classic \kdtrees are shown to have logarithmic average case complexity for distance functions \emph{strongly equivalent} to L-p metrics \cite{friedman1977algorithm}.
While this requirement is reasonable for many applications, it does not apply to distances induced by the shortest paths of nonholonomic systems.
Therefore, identifying nearest neighbors in the sense of a generic control cost remains an important open problem in sampling-based motion planning.
In both literature and practical implementations, when searching for neighbors, randomized planners resort to distance functions that only approximate the true control cost.
Arguably, the most common choices are Euclidean distance or quadratic forms~\cite{lavalle2001randomized}.
This ad-hoc approach can significantly slow down the convergence rate of sampling-based algorithms if an inappropriate metric is selected.

A number of heuristics have been proposed to resolve this issue, such as the \emph{reachability} and \emph{utility guided RRTs}~\cite{shkolnik2009reachability,burns2007single} which bias the tree expansion towards promising regions of the configuration space.
Other approaches use the cost of linear quadratic regulators \cite{glassman2010quadratic} and learning techniques \cite{palmieri2014distance,bharatheesha2014distance}.
In specific examples, these heuristics can significantly reduce the negative effects of finding nearest neighbors according to a metric inconsistent with the minimum cost path between configurations.
However, the underlying inconsistency is not directly addressed. In contrast, a strong motivation to address it comes from recent research \cite{karaman2013sampling}, which shows that a major speedup of sampling-based kynodinamic planners can be achieved by considering nonholonomy at the stage of  nearest-neighbor search.

Specialized \kdtree algorithms have been proposed to account for non-standard topologies of some configuration manifolds \cite{kuffner2004effective,yershova2007improving,ichnowski2015fast}.

However, no effort is known towards generalizing such algorithms to differential constraints.

In this work, we investigate the use of \kdtrees for \emph{exact} nearest-neighbor search in the presence of differential constraints.
The main contributions can be summarized as follows: (i) we derive the expected complexity of nearest-neighbor queries with \kdtrees built according to classic techniques and reveal that it is super-logarithmic (ii) we propose novel \kdtree build and query procedures tailored to sub-Riemannian metrics (iii) we provide numerical trials which verify our theoretical analysis and demonstrate the improvement afforded by the proposed algorithms as compared with popular open source software libraries, such as FLANN \cite{muja_flann_2009} and OMPL \cite{sucan2012the-open-motion-planning-library}.

In Section \ref{sec:geometry_review}, we review background material on sub-Riemannian geometries and show connections with nonholonomic systems, providing asymptotic bounds to their reachable sets.
Based on these bounds, in Section \ref{sec:kdtrees} we propose a query procedure specialized for nonholonomic systems, after a brief review of the \kdtree algorithm.
In Section \ref{sec:analysis}, we study the expected complexity of $m$-nearest-neighbor queries on a classic \kdtree with sub-Riemannian metrics.
Inspired by this analysis, in Section \ref{sec:lie_split} we propose a novel incremental build procedure. Finally, in Section \ref{sec:experiments} we show positive experimental results for a nonholonomic mobile robot, which confirm our theoretical predictions and the effectiveness of the proposed algorithms.

\section{Geometry of Nonholonomic Systems} \label{sec:geometry_review}
Nonholonomic constraints are frequently encountered in robotics and describe mechanical systems whose local mobility is, in some sense, limited.
Basic concepts from differential geometry, reviewed below, are used to clarify these limitations and discuss them quantitatively.

\subsection{Elements of Differential Geometry}
A subset $\mathcal{M}$ of $\mathbb{R}^n$ is a \emph{smooth $k$-dimensional manifold} if for all $p\in \manifold$ there exists a neighborhood $V$ of $\manifold$ such that $V\cap \manifold$ is diffeomorphic to an open subset of $\mathbb{R}^k$.
A vector $v \in \real^n$ is said to be \emph{tangent to $\manifold$ at point $p \in \manifold$} if there exists a smooth curve $\gamma: [0,1] \rightarrow \manifold$, such that $\dot \gamma (0) = v$ and $\gamma(0) = p$.
The \emph{tangent space} of $\mathcal{M}$ at $p$, denoted $T_p\manifold$, is the subspace of vectors tangent to $\manifold$ at $p$. %
A map $Y: \manifold \rightarrow \mathbb{R}^n$ is a \emph{vector field} on $\manifold$ if for each $p\in \mathcal{M}$, $Y(p) \in T_p\manifold$.
A smooth euclidean \emph{vector bundle} of rank $r$ over $\manifold$ is defined as a set $E \subset \manifold \times \real^l$ such that the set $E_p:\left\{v\in \real^l, (p,v) \in E\right\}$ is an $r$-dimensional vector space for all $p \in \manifold$.
$E_p$ is called the \emph{fiber} of bundle $E$ at point $p$.
Any set of $h$ linearly independent smooth vector fields $Y_1, \dots Y_h$ such that for all $p \in \manifold,$ $span(Y_1(p), ..., Y_h(p)) = E_p$ is called a \emph{basis} of $E$ and $Y_i$ are called \emph{generator vector fields} of $E$.
The \emph{tangent bundle} of $\manifold$ is defined as the vector bundle $T\manifold$ whose fiber at each point $p$ is the tangent space at that point, $T_p\manifold$. A \emph{distribution} $\horzdist$ on a manifold is a subbundle of the tangent bundle, i.e., a vector bundle such that its fiber $\horzdist_p$ at all points is a vector subspace of $T_p\manifold$.
\vspace{-2mm}
\paragraph{Connection with nonholonomic systems.}
Consider a nonholonomic system described by the differential constraint:
\begin{equation}\label{eq:dynamics}
\dot x (t)= f(x(t), u(t)),
\end{equation}
where the configuration $x(t)$ and control $u(t)$ belong to the \emph{smooth manifolds} $\mathcal X$ and $\mathcal U$ respectively.
Each ${\goth u_i}\in \mathcal{U}$ defines a \emph{vector field} $g_i(z) = f(z, \goth u_i)$ on $\mathcal X$.
Therefore, for a fixed configuration $z$, $f(z, u)$ has values in a vector space $\horzdist_{z} \coloneqq Span(\{g_i(z)\}) $.
In other words, the dynamics described by equation (\ref{eq:dynamics}) define a \emph{distribution} $\horzdist$ on the configuration manifold.

\vspace{2mm}
Throughout the paper, we will make use of the Reeds-Shepp vehicle~\cite{reeds1990optimal} as an illustrative example.
\begin{example-rs}[Configuration manifold of a Reeds-Shepp vehicle]
The configuration manifold for this vehicle model is $\mathcal X = SE(2)$ with coordinates $x=(x_1,x_2,x_3)$.
The mobility of the system is given by $\dot x_1 = u_1 \cos(x_3),\, \dot x_2 = u_1 \sin(x_3),\, \dot x_3 = u_1 u_2,$ where $u_1,u_2\in [-1,1]$.
The inputs $\goth u_1 = (1,0)$ and $\goth u_2 = (1,1)$ define the vector fields $g_1(x) = (\cos(x_3), \sin(x_3), 0)$ and $ g_2(x) = (\cos(x_3), \sin(x_3), 1)$.
At each $z\in SE(2)$ the fiber of the Reeds-Shepp distribution $\horzdist^{RS}$ is $Span\{g_1(z),$ $ g_2(z)\}$.
Let $\hat f(x) = (\cos(x_3), \sin(x_3), 0)$, $\hat l(x) = (-\sin(x_3), \cos(x_3), 0)$ and $\hat \theta(x) = (0,0,1)$. These vector fields indicate the body frame of the vehicle, i.e., its \emph{front} $\hat f$, \emph{lateral} $\hat l$ and \emph{rotation} $\hat \theta$ axes. The Reeds-Shepp distribution is then equivalently defined by the fibers
 $\horzdist^{RS}_{z} = Span\{\hat f(z), \hat \theta(z)\}$.
\end{example-rs}

\subsection{Distances in a Sub-Riemannian Geometry}
A \emph{sub-Riemannian geometry} $\mathcal G$ on a manifold $\manifold$ is a tuple $\mathcal G =(\manifold, \horzdist, \langle\cdot,\cdot\rangle_\horzdist)$, where $\horzdist$ is a distribution on $\manifold$ whose fibers at all points $p$ are equipped with the inner product $\langle\cdot, \cdot\rangle_\horzdist : \horzdist_p \times \horzdist_p \rightarrow \real$.
The distribution $\horzdist$ is referred to as the \emph{horizontal distribution}.
A smooth curve $\gamma: [0,1] \rightarrow \manifold$ is said to be \emph{horizontal} if $\dot \gamma(t) \in \horzdist_{\gamma(t)}$ for all $t \in [0,1]$.
The length of smooth curves is defined $\ell_{\mathcal G}(\gamma):=\int_0^1 \sqrt{\langle\dot \gamma(t),\dot \gamma(t)\rangle_\horzdist}\,  dt$.
If $\Gamma_a^b$ denotes the set of horizontal curves between $a,b\in\manifold$, then $d_\mathcal{G}(a,b)\coloneqq\inf_{\gamma\in\Gamma_a^b}\ell_{\mathcal G}(\gamma)$ is a \emph{sub-Riemannian metric} defined by the geometry.
The ball centered at $p$ of radius $r$ with respect to the metric $d_\mathcal{G}$ is denoted  $\ball{p}{r} $.
In control theory, the \emph{attainable set} $\mathcal{A}(x_0,t)$ is the subset of $\mathcal{X}$ such that for each $p\in\mathcal{A}(x_0,t)$ there exists a $u:[0,\tau]\rightarrow \mathcal{U}$ for which the solution to (\ref{eq:dynamics}) through $x(0)=x_0$ satisfies $x(\tau)=p$ for some $\tau\leq t$.
Solutions to (\ref{eq:dynamics}) are simply horizontal curves, so the attainable set $\mathcal A(x_0, t)$ is equivalent to the ball $\mathcal B(x_0, t)$ defined by the sub-Riemannian metric. This equivalence holds for systems with time-reversal symmetry, under the controllability condition stated in Theorem~\ref{thm:chow}.

\begin{example-rs}[Sub-Riemannian geometry of a Reeds-Shepp vehicle]
The geometry associated with the Reeds-Shepp vehicle is $\mathcal G^{RS} = (SE(2), \horzdist^{RS}, \langle\cdot, \cdot\rangle_{RS}) $, with the standard inner product $\langle v, w\rangle_{RS} = v^Tw$. Horizontal curves for this geometry are feasible paths satisfying the differential constraints. Geodesics correspond to minimum-time paths between two configurations and are known in closed form~\cite{reeds1990optimal}.
\end{example-rs}
\subsection{Iterated Lie Brackets and the Ball-box Theorem}
A system is said to be \emph{controllable} if any pair of configurations can be connected by a feasible (horizontal) path.
Determining controllability of a nonholonomic system is nontrivial.
For example, the Reeds-Shepp vehicle cannot move directly in the lateral direction, but intuition suggests that an appropriate sequence of motions can result in a lateral displacement (e.g., in parallel parking).
Chow's Theorem and the Ball-box Theorem, reviewed below, are fundamental tools related to the controllability and the reachable sets of nonholonomic systems.

The \emph{Lie derivative} of a vector field $Y$ at $p\in\manifold$ in the direction $v \in T_p\manifold$ is defined as $dY(p)v  = \left. \frac{d}{dt} Y(\gamma(t))\right|_{t=0}$, where $\gamma$ is a smooth curve starting in $p = \gamma(0)$ with velocity $v = \dot \gamma(0)$. Given two vector fields  $Y_1,Y_2$ on $\manifold$, the \emph{Lie bracket} $[Y_1, Y_2]$  is a vector field on $\manifold$ defined as $[Y_1, Y_2](p) = dY_2(p)Y_1(p) -  dY_1(p)Y_2(p)$.

From the horizontal distribution $\horzdist$, one can construct a sequence of distributions by iterating the Lie brackets of the generating vector fields, $Y_1, Y_2 \dots Y_h$, with $h \leq k$. Recursively, this sequence is defined as: $\horzdist^1 = \horzdist$, $\horzdist^{i+1} = \horzdist^i \cup [\horzdist, \horzdist^i]$,
where $[\horzdist, \horzdist^i]$ denotes the distribution given by the Lie brackets of each generating vector field of $\horzdist$ with those of $\horzdist^i$.
Note that the Lie bracket of two vector fields can be linearly independent from the original fields, hence $\horzdist^i\subseteq \horzdist^{i+1}$. The \emph{Lie hull}~\cite{montgomery2006tour}, denoted $Lie(H)$, is the limit of the sequence $\horzdist^i$ as $i \rightarrow \infty$. A distribution $\horzdist$ is said to be \emph{bracket-generating}  if $Lie(\horzdist) = T\manifold$.

\begin{theorem}{\bf (Chow's Theorem }\cite[p. 44]{montgomery2006tour} {\bf )}.\label{thm:chow} If $Lie(\horzdist) = T\manifold$ on a connected manifold $\manifold$, then any $a, b \in \manifold$ can be joined by a horizontal curve. \end{theorem}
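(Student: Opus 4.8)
The plan is to prove this by a flow-composition argument showing that iterated Lie brackets of the generating vector fields are "reachable directions," and then invoking the bracket-generating hypothesis together with connectedness. First I would fix $a \in \manifold$ and consider the \emph{orbit} $\mathcal{O}_a$ of $a$: the set of all points reachable from $a$ by concatenating flows $\phi_t^{Y_i}$ of the generating vector fields $Y_1,\dots,Y_h$ for positive \emph{and} negative times $t$ (negative times are admissible here since, for Chow's theorem, one only needs horizontal curves, not forward-attainability). This set is clearly path-connected by horizontal curves, so it suffices to show $\mathcal{O}_a$ is open; then, since the orbits partition $\manifold$ into open sets and $\manifold$ is connected, there is only one orbit, which must be all of $\manifold$, giving $b \in \mathcal{O}_a$.

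The core step is to show $\mathcal{O}_a$ is open, equivalently that at every $p \in \mathcal{O}_a$ the tangent space to the orbit is all of $T_p\manifold$. Here I would use the key identity expressing the Lie bracket as the second-order term in a commutator of flows:
\[
\phi_{-t}^{Y_2}\circ\phi_{-t}^{Y_1}\circ\phi_{t}^{Y_2}\circ\phi_{t}^{Y_1}(p) = p + t^2\,[Y_1,Y_2](p) + o(t^2).
\]
Reparametrizing by $s = t^2$ exhibits a curve through $p$, lying in $\mathcal{O}_a$, whose velocity at $s=0$ is $[Y_1,Y_2](p)$; iterating this construction shows every iterated bracket $[Y_{i_1},[Y_{i_2},[\dots]]](p)$ is tangent to the orbit. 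More carefully, one argues that the map from a neighborhood of the origin in $\real^N$ (with $N$ suitably large, coordinates being composite flow times) into $\manifold$, sending flow-time parameters to the endpoint, has differential at $0$ whose image contains $\horzdist^i_p$ for every $i$; since $\horzdist$ is bracket-generating, $Lie(\horzdist)_p = T_p\manifold$, so by the rank theorem the image of a neighborhood is a full-dimensional subset of $\manifold$ containing $p$ in its interior — hence $\mathcal{O}_a$ is open.

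The main obstacle is making the "iterated bracket is tangent to the orbit" claim rigorous: the naive composition of four flows only produces the bracket to leading order with an $o(t^2)$ error, so one cannot simply read off a smooth curve with the desired velocity without care. The standard fix is either (i) to use the square-root reparametrization together with a lemma that a continuous curve lying in $\mathcal{O}_a$ with a well-defined one-sided derivative contributes that derivative to the tangent cone, and then to show this tangent cone is actually a linear subspace closed under brackets (this is the heart of the Orbit Theorem of Sussmann/Stefan), or (ii) to work with the distribution generated by \emph{all} brackets and show it is integrable-in-the-sense-of-Sussmann, so that orbits are immersed submanifolds whose tangent space at $p$ is exactly $Lie(\horzdist)_p$. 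I would take route (i), citing the Orbit Theorem to get that $\mathcal{O}_a$ is an immersed submanifold with $T_p\mathcal{O}_a \supseteq Lie(\horzdist)_p = T_p\manifold$, whence $\mathcal{O}_a$ is open, and conclude by the connectedness argument above. The time-reversal symmetry of the generating family (flowing $Y_i$ backwards) is what makes the orbit an equivalence class and is exactly the hypothesis under which the attainable set coincides with the metric ball, as remarked before the theorem.
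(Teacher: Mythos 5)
The paper does not prove Chow's Theorem; it cites it from Montgomery's book, so there is no in-paper proof to compare against. Your sketch follows the standard Chow--Rashevskii argument (orbits are open via commutators of flows, then connectedness forces a single orbit), correctly identifies the genuine technical subtlety --- that the four-flow commutator only gives $[Y_1,Y_2]$ to order $t^2$, so the square-root reparametrization produces a curve with only a one-sided derivative --- and correctly offloads the rigor to Sussmann's Orbit Theorem, which is the cleanest way to close that gap. This is sound and is essentially the argument one would find in Montgomery's cited treatment; the only thing I would tighten is the identity you wrote, which uses the notation $p + t^2[Y_1,Y_2](p)$ as if $\manifold$ were a vector space --- in a chart this is fine, but you should either say so explicitly or phrase it intrinsically as a Taylor expansion of the endpoint map in local coordinates.
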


\begin{example-rs}[Lie hull of a Reeds-Shepp vehicle]
\label{ex:chow}
Consider the Lie hull of $\horzdist^{RS}$ generated by $\{ \hat{f}(x),\hat{ \theta}(x) \}$.
For every $x\in SE(2)$, these vectors span a two-dimensional subspace of $T_x\mathcal  (SE(2))$.
The first order Lie bracket is given by
\begin{equation}
 [\hat f, \hat \theta](x) = \left(\frac{d}{dx}\hat f(x)\right) \hat{\theta}(x) - \left(\frac{d}{dx}\hat{\theta}(x)\right) \hat{f}(x)= \left(-\sin(x_3), \cos(x_3), 0 \right).
\end{equation}
This coincides with the body frame lateral axis $\hat l(x)$ of the vehicle.
Therefore, the second order distribution $\horzdist^2_x = Span\{\hat f(x), \hat \theta(x), [\hat f, \hat \theta](x)\}$ spans the tangent bundle of $SE(2)$, and thus the Lie hull is obtained in the second step.
By Theorem \ref{thm:chow}, there is a feasible motion connecting any two configurations which is consistent with one's intuition about the wheeled robot.
\end{example-rs}

Above, we have shown that from a basis $Y_1 \dots Y_h$ of a bracket-generating distribution $\horzdist$, one can define a basis $y_1, \dots y_k$ of $T\manifold$. Specifically $y_i = Y_i$ for $i\leq h$, while the remaining $d-h$ fields are obtained with Lie brackets.
The vector fields $\{y_i\}$ are called \emph{privileged directions} \cite{montgomery2006tour}.
Define the \emph{weight} $w_i$ of the privileged direction $y_i$ as the smallest order of Lie brackets required to generate $y_i$ from the original basis. More formally, $w_i$ is such that $y_i \notin \horzdist^{w_i-1}$ and $y_i \in \horzdist^{w_i}$ for all $i$. The \emph{weighted box} at $p$ of size $\epsilon$, weights $w \in \natural^k_{>0}$ and multipliers $\mu \in \real^k_{> 0}$ is defined as:
\begin{equation}\mathrm{Box}^{w,\mu}(p, \epsilon): \{y \in \real^n: |\langle y-p, y_i(p)\rangle_{\real^n}|< \mu_i \epsilon^{w_i} \text{ for all } i \in [1, k] \}. \end{equation}
\begin{theorem}[The ball-box Theorem]\label{thm:ball_box}
Let $\horzdist$ be a distribution on a manifold $\manifold$ satisfying the assumptions of Chow's Theorem. Then, there exist constants $\epsilon_0 \in \real_{>0}$ and $c, C \in \real^k_{>0}$ such that for all $\epsilon < \epsilon_0$ and for all $p \in \manifold$:
\begin{equation}\label{eq:ball_box}
\mathrm{Box}^w_{i}(p, \epsilon) \coloneqq \mathrm{Box}^{w, c}(p, \epsilon) \subset \ball{p}{\epsilon} \subset \mathrm{Box}^{w, C}(p, \epsilon) \coloneqq \mathrm{Box}^w_{o}(p, \epsilon),
\end{equation}
where $\mathrm{Box}^w_i$ and $\mathrm{Box}^w_o$ are referred to as the inner and outer bounding boxes for the sub-Riemannian ball $\mathcal B$, respectively.
\end{theorem}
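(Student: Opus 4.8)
The plan is to establish the two inclusions in~\eqref{eq:ball_box} separately, using a system of \emph{privileged coordinates} centered at each $p\in\manifold$ as the main technical device; all inclusions are understood within $\manifold$. Recall that the privileged directions $y_1,\dots,y_k$, ordered by nondecreasing weight $w_1\le\cdots\le w_k$, form a basis of $T_p\manifold$. By the flow-box theorem one can build, in a neighborhood of $p$, local coordinates $(z_1,\dots,z_k)$ with $\partial/\partial z_i|_p = y_i(p)$ whose vanishing order along horizontal curves issuing from $p$ is controlled by $w_i$. Concretely I would take coordinates of the second kind, $z\mapsto \exp(z_k y_k)\circ\cdots\circ\exp(z_1 y_1)(p)$, and, following the construction of privileged coordinates in~\cite{montgomery2006tour}, correct them so that each generator $Y_j$ of $\horzdist$ becomes, in these coordinates, a vector field homogeneous of weight $-1$ with respect to the anisotropic dilations $\delta_\lambda(z)=(\lambda^{w_1}z_1,\dots,\lambda^{w_k}z_k)$ plus a remainder of strictly higher weight. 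Since~\eqref{eq:ball_box} is only an up-to-constants statement, tracking the weights of the remainder terms is all that the argument will actually need.

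For the outer inclusion $\ball{p}{\epsilon}\subset\mathrm{Box}^w_o(p,\epsilon)$, I would show that a horizontal curve $\gamma$ of length less than $\epsilon$ cannot change the coordinate $z_i$ by more than $O(\epsilon^{w_i})$. Writing $\dot z = \sum_j u_j(t)\,\tilde Y_j(z)$ with $\sum_j u_j^2\le 1$, and using that the weight-$(-1)$ part of $\tilde Y_j$ contributes to $\dot z_i$ only through monomials of weight $w_i-1$ in the lower coordinates, a Gronwall-type bootstrap on the ansatz $|z_i(t)|\le M\,t^{w_i}$ closes the estimate; the entries of the constant vector $C$ are read off from the $C^{w_i}$ norms of the generators, which are uniformly bounded on any compact piece of $\manifold$ (for non-compact $\manifold$ one works on a fixed compact neighborhood, which does not affect the local statement). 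Equivalently, this is the familiar lower bound $d_\mathcal{G}(p,q)\gtrsim \max_i |z_i(q)|^{1/w_i}$.

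For the inner inclusion $\mathrm{Box}^w_i(p,\epsilon)\subset\ball{p}{\epsilon}$, the harder half, I would reach an arbitrary target $q$ with $|z_i(q)|<c_i\epsilon^{w_i}$ by concatenating $k$ elementary \emph{maneuvers}, the $i$-th of which adjusts $z_i$ by the required amount while perturbing the already-set coordinates only at higher order. Each maneuver is a bracket-realizing loop: if $y_i$ is obtained as an iterated Lie bracket of $w_i$ generators $Y_{a_1},\dots,Y_{a_{w_i}}$, then the corresponding iterated group commutator of the flows $\exp(\pm t\,Y_{a_\ell})$ is a horizontal path of length $O(t)$ whose net displacement is $t^{w_i}y_i(p)+o(t^{w_i})$; taking $t\asymp\epsilon$ yields displacement $\asymp\epsilon^{w_i}$ along $y_i$ for a path of length $O(\epsilon)$. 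Performing the maneuvers from $i=k$ down to $i=1$ and passing to the rescaled variables $\zeta_i = z_i/\epsilon^{w_i}$ --- in which, by the homogeneity built into the privileged coordinates, the composite endpoint map converges as $\epsilon\to 0$ to the identity on a fixed box --- the inverse function theorem shows this composite map is onto a box of the required size, with $\epsilon$-independent constants $c_i$. The total length is $\sum_i O(\epsilon)=O(\epsilon)$, so rescaling $\epsilon$ by a fixed factor absorbs the leading constant.

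The main obstacle is precisely this inner inclusion: making the commutator-of-flows approximation of a weight-$w_i$ bracket quantitative and, crucially, \emph{uniform in $p$}, and then checking that the higher-order errors accumulated across the $k$ maneuvers survive the rescaled-limit argument. This is where genuine privileged coordinates (as opposed to merely linearly adapted ones) are essential: only then are the remainders of strictly higher weight, so that after the dilation $\delta_{1/\epsilon}$ they vanish as $\epsilon\to 0$ and the inverse function theorem applies with constants independent of $\epsilon$ and of $p$. Uniformity of $\epsilon_0$, $c$ and $C$ over $\manifold$ (or a compact part of it) then follows from the continuous dependence on $p$ of all the vector-field norms and bracket coefficients involved.
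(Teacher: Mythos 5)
The paper does not prove the Ball--Box Theorem; it states it as background material and imports it from Montgomery's book (the reference \cite{montgomery2006tour} appearing just above the statement), so there is no in-paper proof to compare your attempt against. Judged on its own terms, your outline is the standard proof from the sub-Riemannian literature (Bellaïche, Gromov, and the construction in the cited reference): build genuine privileged coordinates rather than merely linearly adapted ones, obtain the outer inclusion from a weight-graded Gronwall estimate yielding $|z_i(t)|\lesssim t^{w_i}$ along horizontal curves, and obtain the inner inclusion by realizing each weight-$w_i$ direction with an iterated-commutator-of-flows loop of length $O(\epsilon)$ and then passing to the $\delta_{1/\epsilon}$-rescaled endpoint map, where the higher-weight remainders vanish and the inverse function theorem gives a box of fixed size. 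Your decomposition, your identification of the inner inclusion as the genuinely hard half, and your observation that uniformity of $\epsilon_0, c, C$ requires working on a compact piece of $\manifold$ are all accurate; the last point is in fact a small imprecision in the paper's statement (``for all $p\in\manifold$'' with $p$-independent constants implicitly presupposes such compactness or a uniformity hypothesis), which your sketch handles correctly.

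Two cautions, neither fatal but both worth saying plainly. First, you invoke the flow-box theorem and exponential coordinates of the second kind as a starting point and then say you would ``correct them'' to be privileged; this correction (the polynomial change of variables killing lower-weight terms in the Taylor expansion of the $Y_j$) is not cosmetic --- it is exactly what makes the rescaled endpoint map converge, and without spelling it out the rescaled-limit step of the inner inclusion does not close. Second, your plan explicitly defers the quantitative, $p$-uniform commutator estimate to a ``main obstacle'' paragraph rather than executing it; as written this is an outline, not a proof, and the inner inclusion is precisely where the content lives. You diagnose the difficulty correctly, but a referee would still ask you to carry it out or cite a place where it is carried out, which is what the paper itself does by citing Montgomery.
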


\begin{example-rs}[Ball-Box Theorem visualized for a Reeds-Shepp vehicle]
From the Lie hull construction in the previous example, we know that $\hat f(x), \hat \theta(x) \in \horzdist^1$ so the corresponding weights are $w_{\hat f} = w_{\hat \theta} = 1$. Conversely, $\hat l(x)$ first appears in $\horzdist^2$, so its weight is $w_{\hat l} = 2$.
Theorem \ref{thm:ball_box} states the existence of inner and outer bounding boxes for reachable sets as $t \rightarrow 0$ and predicts the infinitesimal order of each side of these boxes, as shown in figure \ref{fig:ball_box}.
Higher order Lie brackets correspond to sides that approach zero at a faster asymptotic rate. The longitudinal and angular sides --- along $\hat f$ and $\hat \theta$ --- scale with $\Theta(t)$, while the lateral one --- along $\hat l$ --- with $\Theta(t^2)$. Therefore, both boxes become increasingly elongated along $\hat f$ and $\hat \theta$ and flattened along $\hat l$ as $t\shortrightarrow 0$.
Intuitively, this geometric feature of boxes reflects the well known fact that a small lateral displacement of a car requires more time than an equivalent longitudinal one.
\end{example-rs}
  \begin{figure}[!ht]\vspace{-3mm}\hspace{-0.2cm}
    \includegraphics[width=\textwidth, page=2]{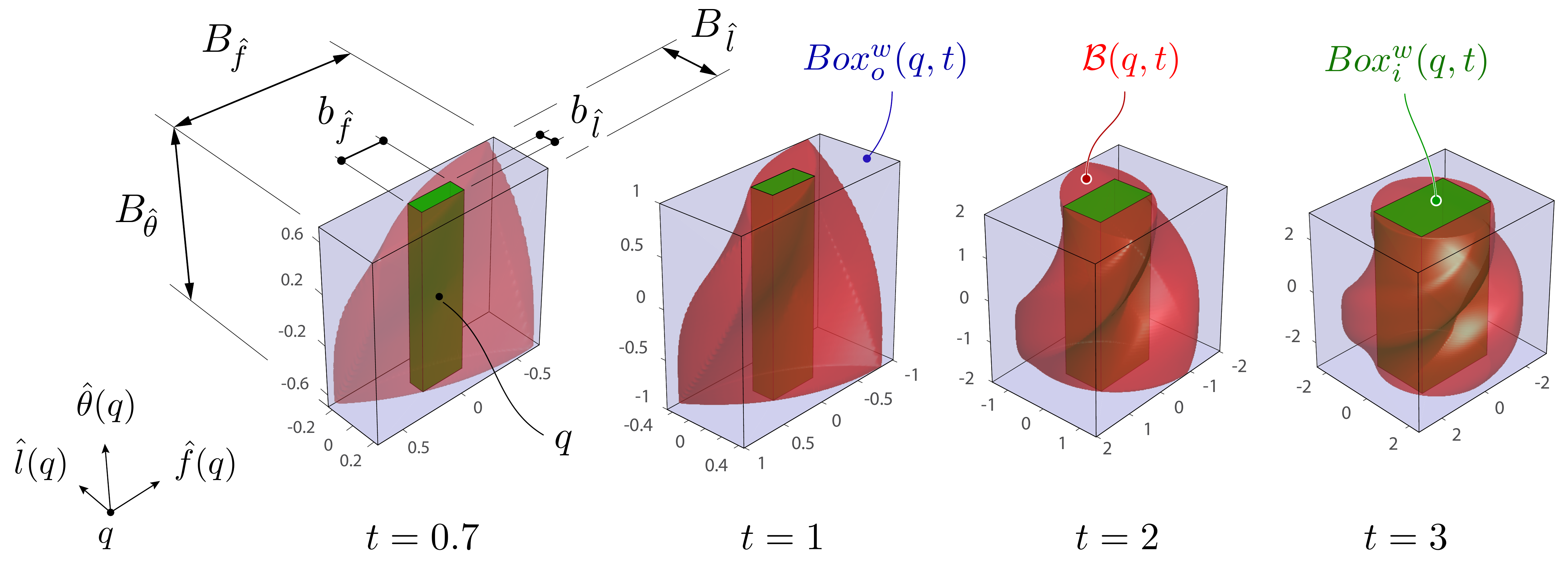}
    \caption{Reachable sets and bounding boxes for a Reeds-Shepp vehicle around a configuration $q$ and for different values of $t$. The lengths of the box sides are highlighted. For both the inner and outer boxes, as $t\rightarrow 0$, the sides along the front $\hat f$ and heading $\hat \theta$ axes are linear in $t$, while the side along the lateral axis $\hat l$ is quadratic in $t$.\vspace{-4mm}}
    \label{fig:ball_box}
  \end{figure}
For the Reeds-Shepp vehicle, the sides of both boxes can be computed explicitly with geometric considerations on special manoeuvers that maximize or minimize the displacement along each axis.
 In particular, we get the values $ C_{\hat f}~=~C_{\hat \theta}~=~c_{\hat \theta}~=~1,\; C_{\hat l}~=~1/2,  \; c_{\hat f} = \sqrt{3/2}-1, \; c_{\hat l} = 1/8$.

\section{The \kdtree Data Structure} \label{sec:kdtrees}
A \kdtree $\mathcal{T}$ is a binary tree organizing a \emph{finite} subset $X \subset \manifold$, called a \emph{database}, with its elements $x_i \in X$ called \emph{data points}.
We would like to find the $m$ points in $X$ closest to a given \emph{query point} $q$ on the manifold. 
Each point $x_i\in X$ is put in relation with a normal vector $n_i\in \real^n$.
Together, the pair $v_i = (x_i,n_i)$ defines a \emph{vertex} of the binary tree.
The set of vertices is denoted with $\mathcal{V}$.
A vertex defines a partition of $\mathbb{R}^{n}$ into two halfspaces, referred to as the \emph{positive} and \emph{negative halfspace}, and respectively described algebraically:
\begin{equation}
\begin{array}{l}
\mathfrak{h}^{+}(x_i, n_i)\coloneqq\left\{ x\in\mathbb{R}^{n}:\,\left\langle n_i,x\right\rangle _{\mathbb{R}^{n}}>\left\langle n_i,x_i\right\rangle _{\mathbb{R}^{n}}\right\} ,\\
\mathfrak{h}^{-}(x_i, n_i)\coloneqq\left\{ x\in\mathbb{R}^{n}:\,\left\langle n_i,x\right\rangle _{\mathbb{R}^{n}}\leq\left\langle n_i,x_i\right\rangle _{\mathbb{R}^{n}}\right\}.
\end{array}
\end{equation}
An \emph{edge} is an ordered pair of vertices $e = (v_i, v_j)$.
A binary tree is defined as  $\mathcal T\coloneqq(\mathcal V, \mathcal E^-, \mathcal E^+)$, with $\mathcal V$ set of vertices, $\mathcal E^-$ set of left edges and $\mathcal E^+$ set of right edges.
Given one edge $e = (v_i, v_j)$, vertex $v_j$ is referred to as the \emph{left child} of $v_i$ if $e \in \mathcal E^-$, or the \emph{right child} if $e \in \mathcal E^+$.
Let $\verb|parent|(v_i\in \mathcal V) = v_j \in \mathcal V$ s.t. $(v_i, v_j) \in \mathcal E^- \cup \mathcal E^+$. By convention, $\verb|parent|(v_i) =\emptyset$ if such $v_j$ does not exist and $v_i$ is called the \emph{root} of $\mathcal T$, denoted $v_i =\verb|root|(\mathcal T)$.
Let  $\verb|child|(v_i \in \mathcal V, s \in \{-, +\}) = v_j \in \mathcal V$ s.t. $(v_i, v_j) \in \mathcal E^s$, or otherwise $\emptyset$ if such $v_j$ does not exist.

The fundamental property of \kdtrees is that \emph{left children belong to the negative halfspace defined by their parents and right children to the positive halfspace}.
Recursively, a vertex belongs to the parent halfspaces of all its ancestors.
As a result, a \kdtree defines a \emph{partition} of $\manifold$ into non-overlapping polyhedra, called \emph{buckets} \cite{friedman1977algorithm}, that cover the entire manifold. In the sequel, we let $\mathscr B_{\mathcal T }$ denote the set of buckets for a given \kdtree $\mathcal T$.

Buckets are associated with the leaves of $\mathcal T$ and with parents of only-child leaves (i.e., leaves without a sibling). For any given point $q\in \manifold$, we denote with $\goth b_q$ the unique bucket containing $q$.

 \SetKwFunction{leftSubset}{leftSubset}
 \SetKwFunction{median}{median}

\subsection{$m$-nearest-neighbor query algorithm} \label{sec:query}

The computational efficiency afforded by the algorithm comes from the following observation:
let $q$ be the query point and suppose that among the distance evaluations computed thus far, the $m$ closest points to $q$ are within a distance $d_m$ away.
Now consider a vertex $(x,n)$ and suppose the query point $q$ is contained in $\mathfrak{h}^{+}(x,n)$ and $\mathcal B(q, d_m)\subset\mathfrak{h}^{+}(x,n)$.
The data points represented by the sibling of $(x,n)$ and all of its descendants are contained in $\mathfrak{h}^{-}(r,c)$ and are therefore at a distance greater than $d_m$.
Thus, the corresponding subtree can be omitted from the search.
The following primitives will be used to define the query procedure presented in Algorithm \ref{alg:query}.

\vspace{2mm} \noindent{\bf Side of hyperplane.} A procedure \verb|sideOf|$(x, p, n) \rightarrow \{-,+\}$, with $x, p \in \manifold$, $n \in \real^n$. Returns + iff $x \in \goth h^+(p,n)$ and $-$ otherwise. For convenience, define $\verb|opposite|(+) = -$ and vice-versa.

\vspace{2mm} \noindent{\bf Queue.} For an $m$-nearest-neighbor query, the algorithm maintains a \emph{Bounded Priority Queue}, $Q$ of size $m$ to collect the results. The queue can be thought of as a sequence $Q = [q_1, q_2, \dots q_m ]$, where each element $q_i$ is defined as a distance-vertex pair $q_i: (d_i, v_i), \, d_i \in \real_{\geq 0}, \, v_i \in \mathcal V$. The property $d_1 \leq d_2 \leq \dots \leq d_m$ is an invariant of the data structure.
When an element $(d_{new}, n_{new})$ is inserted in the queue, if $d_{new} < d_m$, then $q_m$ is discarded and the indices of the remaining elements are rearranged to maintain the order.

\vspace{2mm} \noindent{\bf Ball-Hyperplane Intersection.} A procedure that determines whether a ball and a hyperplane  intersect. Precisely, \verb|ballHyperplane|$(x, R, p, n)$, with $x \in \mathcal M$ and $R \geq 0$, returns true if $\ball{x}{R} \cap \goth h(p,n) \neq \emptyset$. Note that it does not need to return false otherwise.\vspace{1.5mm}

In the classic analysis of \kdtrees~\cite[eq. 14]{friedman1977algorithm} the distance is assumed to be a sum of component-wise terms, called \emph{coordinate distance functions}. Namely:
\vspace{-.5mm}
\begin{equation}\label{eq:accum_dist} \textstyle
d(x, y) = F\left(\sum_{i=1}^k d_i(x_i, y_i)\right).\vspace{-.5mm}
\end{equation}
When this holds, e.g., for L-p metrics, the ball-hyperplane intersection procedure reduces to the direct comparison of two numbers.

However, this property does not hold for other notions of distance and in general sub-Riemannian balls can have nontrivial shapes. For these cases, the procedure can be implemented by checking that $V_o(x, R) \,\cap\, \goth h(p,n) \neq \emptyset$, where $V_o$ is an ``outer set'' with a convenient geometry, i.e., a set such that $ V_o(x, R) \supseteq \mathcal B(x, R)$ for all $x \in \manifold, R>0$ and such that intersections with hyperplanes are easy to verify.
Clearly, $\mathrm{vol}(V_o(x, R))/\mathrm{vol}(\mathcal B(x, R)) \geq 1$ and it is desirable that this ratio stays as small as possible.

A crucial consequence of Theorem \ref{thm:ball_box} is that the choice $V_o(x, R) = \mathrm{Box}_o^w(x, R)$ offers \emph{optimal asymptotical behavior}, since the volume of the outer set scales with the smallest possible order, namely: $\mathrm{vol}(\mathrm{Box}^w_o(x, R)) \in \Theta[\mathrm{vol}(\mathcal B(x, R))]$. This fact is fundamental to devise efficient query algorithms for nonholonomic systems.

\begin{example-rs}[Ball-hyperplane intersection for a Reeds-Shepp vehicle]
  The reachable sets shown in figure \ref{fig:ball_box} have a nontrivial geometry. Let us consider two possible implementations of the ball-hyperplane intersection procedure:
  \begin{enumerate}
    \item \emph{Euclidean bound} (EB). Define the set $\mathcal C(p, R) = \{x \in SE(2)\,| (x_1-p_1)^2+(x_2-p_2)^2\leq R^2, |x_3-p_3|\leq 2R\}$, i.e., a cylinder in the configuration space with axis along $\hat \theta$.
    Since the Euclidean distance $\|a-b\|$ is a lower bound to $d_{RS}(a, b)$, then $\mathcal B(x, R) \subset \mathcal C(x, R)$ and \verb|ballHyperplane| can be correctly implemented by checking $\mathcal C(x,R) \cap h(p,n) \neq \emptyset$ \vspace{1mm}
    \item \emph{Outer Box Bound} (BB).
    In this case, the procedure checks $\mathrm{Box}^w_o(x, R) \cap h(p,n) \neq \emptyset$. A simple implementation is to test whether all vertices of the box (8 in this case) are on the same halfspace defined by $\goth h(p,n)$.
  \end{enumerate}
While Euclidean bounds (EB) are a simple choice, the Outer Box Bound (BB) method approximates the sub-Riemannian ball tightly.
In fact, as $R\rightarrow 0$, $\mathrm{vol}(\mathcal C(p, R)) \in \Theta(R^3)$, while $\mathrm{vol}(\mathrm{Box}_o^w(p, R)) \in \Theta(R^4)$ and therefore the volume of the cylinder tends to be infinitely larger than the volume of the outer box.
As a result, method (BB) yields an asymptotically unbounded speedup in the algorithm compared to (EB), as confirmed by our experimental results in Section \ref{sec:experiments}. In addition, any other implementation of the procedure will at most provide a constant factor improvement with respect to method (BB).
\vspace{-7mm}
\end{example-rs}
 \begin{algorithm}[!ht]
   \label{alg:query}
   \caption{\kdtree query}
 \SetKwData{Q}{$Q$}
 \SetKwData{side}{side}
 \SetKwFunction{dist}{dist}
 \SetKwFunction{assign}{\ensuremath{\longleftarrow}}
 \SetKwFunction{sideOf}{sideOf}
 \SetKwFunction{child}{child}
 \SetKwFunction{ballHyperplane}{ballHyperplane}
 \SetKwFunction{query}{query}
 \SetKwFunction{rootfcn}{root}
 \SetKwFunction{opposite}{opposite}
 \SetKwFunction{querySubtree}{querySubtree}
 \SetKwBlock{qfnblock}{define \query($q \in \manifold$, $\mathcal T = (\mathcal V, \mathcal E^-, \mathcal E^+)$):}{end}
 \SetKwBlock{qsfnblock}{define \querySubtree($q \in \manifold$, $v_i = (x_i, n_i) \in \mathcal V$):}{end}

 \qfnblock{
 $Q$ \assign $\{q_{1:m} = (\emptyset, \infty)\}$ \tcp*{initialize queue}
 \qsfnblock{
   \lIf(\tcp*[f]{caller reached leaf}){$v_i = \emptyset$}{
     \Return
   }
   $s$ \assign \sideOf($q$, $x_i$, $n_i$)\;
   \querySubtree($q$, $\child(v_i, s)$) \tcp*{descend to child containing $q$}

   Q \assign $Q \cup \{(\dist(q, \, x_i), v_i) \}$ \tcp*{add vertex to the queue}

   \If(\tcp*[f]{check for intersections}){\ballHyperplane($q$, $d_k$, $x_i$, $n_i$)}{
     \querySubtree($q$, $\child(v_i, \opposite(s)))$ \tcp*{visit sibling}
   }
 }

 \querySubtree($q$, \rootfcn$ (\mathcal T)$) \tcp*{start recursion from root}
 \Return $Q$\;
 }
 \end{algorithm}

\vspace{-10mm}\subsection{\kdtree build algorithms}\label{sec:build}

The performance of nearest-neighbor queries is heavily dependant on how the \kdtree is constructed.
In the sequel, we describe two popular approaches to construct \kdtrees: \emph{batch} (or static) and \emph{incremental} (or dynamic).

In the batch algorithm, all data points are processed at once and statistics of the database determine the vertices of the tree at each depth. This algorithm guarantees a balanced tree, but the insertion of new points is not possible without re-building the tree. Conversely, in the incremental version, the tree is updated on the fly as points are inserted, however, the tree balance guarantee is lost.

Both algorithms find applications in motion planning: the batch version can be used to efficiently build roadmaps for off-line, multiple-query techniques such as PRMs, while the incremental is suitable for anytime algorithms such as RRTs.\vspace{-2mm}

\subsection*{Batch \kdtree build algorithm}\vspace{-2mm}
The following primitive procedures will be used to describe the batch construction of \kdtrees, defined in Algorithm \ref{alg:batch}.
The range of a set $S$ along direction $\hat d$ is defined as $\range_{\hat d}(S) = \sup_{x \in S}\langle x, \hat d \rangle  - \inf_{x \in S}\langle x, \hat d \rangle $.

\vspace{2mm} \noindent{\bf Maximum range.} Let $\verb|maxRange|: 2^X \rightarrow \real^n$. Given a subset $D \in 2^X$ of the data points, \verb|maxRange| determines a direction $l$ along which the range of data is maximal. We consider the formulation in \cite{friedman1977algorithm}, where $l$ is chosen among the cardinal directions $\{\hat e_i\}_{i\in[0, \dots k-1]}$, so that $\verb|maxRange|(D) = \arg\max_{\hat e_i}[ \range_{\hat e_i}( D )]$.

\vspace{2mm} \noindent{\bf Median element.} Let $\verb|median|: 2^X \times \real^n \rightarrow D$. Given a subset $D \in 2^X$ of the data points and a direction $l$, \verb|median| returns the median element of $D$ along direction $l$.
\vspace{-2mm}
 \begin{algorithm}[!ht]\label{alg:batch}
   \caption{Batch \kdtree build}
 \SetKwFunction{assign}{\ensuremath{\longleftarrow}}
 \SetKwFunction{build}{build}
 \SetKwFunction{buildSubtree}{buildSubtree}
 \SetKwFunction{maxRange}{maxRange}
 \SetKwFunction{leftSubset}{leftSubset}
 \SetKwFunction{median}{median}
 \SetKwBlock{bfnblock}{define \build($X$):}{end}

 \SetKwBlock{bsfnblock}{define \buildSubtree($D \in 2^X$, $v_i = (x_i, n_i) \in \mathcal V$, $s \in \{-,+\}$):}{end}
 \bfnblock{
   $\mathcal V\assign \emptyset; \,\mathcal E^-\assign \emptyset; \, \mathcal E^+ \assign \emptyset$ \tcp*{Initialize tree}
   \bsfnblock{
   \lIf(\tcp*[f]{caller reached leaf}){$D = \emptyset$}{
   \Return
   }
   $n_{new}$ \assign \maxRange($D$); $\;\;$ $x_{new}$ \assign \median($D$, $n_{new}$) \;
   $v_{new}$ \assign $(x_{new}, n_{new})$ ; $\;\;$
   $\mathcal V$ \assign $\mathcal V \cup v_{new}$\;
   \lIf(){$s\neq \emptyset$}{
   $\mathcal E^s$ \assign $\mathcal E^s \cup (v_i, v_{new})$
   }
   $L$ \assign $D \cap \goth h^-(x_{new}, n_{new})$;  $\;\;$
   $R$ \assign $D \backslash (L \cup x_{new})$\;
   \buildSubtree($L$, $v_{new}$, -); $\quad$\buildSubtree($R$, $v_{new}$, +)\;
   }
   \buildSubtree($X$, $\emptyset$, $\emptyset$)\;
   \Return $ \mathcal T = (\mathcal V, \mathcal E^-, \mathcal E^+)$\;
 }

 \end{algorithm}
\vspace{-9mm}
\subsection*{Incremental \kdtree build algorithm}\vspace{-2mm}
 A key operation to build a \kdtree incrementally is to pick splitting hyperplanes on the fly as new data points are inserted. This is achieved with a \emph{splitting sequence}, which we define as:

\vspace{2mm} \noindent{\bf Splitting sequence.} A map $\mathcal Z_\manifold: \natural_{\geq 0} \times \manifold \rightarrow \real^n$ that, given an integer $d \geq 0$ and a point $p \in \manifold$, returns a normal vector $n$ associated with them.

 When dealing with $k$-dimensional data, one usually chooses a basis of \emph{cardinal axes} $\{\hat e_i\}_{i\in[0\dots k-1]}$.
The normal of the hyperplane associated with a vertex $v$ is typically picked by cycling through the cardinal axes based on the depth of $v$ in the binary tree. Let ``$a \hspace{-1mm} \mod b$'' denote the remainder of the division of integer $a$ by integer $b$. Then, the \emph{classic splitting (CS) sequence} is defined as:
 \begin{equation}
  \mathcal Z_{\manifold}^{classic} (d, x) = \hat e_{[d\; mod\; k]}.
\end{equation}

\begin{algorithm}[!ht]
  \label{alg:incremental}
  \caption{Incremental \kdtree build. \newline Insert in \kdtree $\mathcal T = (\mathcal V, \mathcal E^+, \mathcal E^-)$, start recursion with  \hspace{-1mm} \texttt{insert}($x_{new}$, \texttt{root}($\mathcal T$)).}
\SetKwData{side}{side}
\SetKwFunction{sideOf}{sideOf}
\SetKwFunction{assign}{\ensuremath{\longleftarrow}}
\SetKwFunction{insert}{insert}
\SetKwFunction{dpth}{depth} 
\SetKwFunction{child}{child}

\SetKwBlock{fnblock}{define \insert($x_{new} \in \manifold$, $v_i = (x_i, n_i) \in \mathcal V$):}{end}

\fnblock{
\If(\tcp*[f]{if tree is empty...}){$\mathcal V  = \emptyset$}{
  $n_{new}$ \assign $\mathcal Z_\manifold(0, x_{new})$; $ \mathcal V \assign (x_{new}\,,\, n_{new}) $ \tcp*{...add root}
  \Return
}
\side \assign \sideOf($x_{new}, x_i, n_i$) ; c \assign \child($v_i$, \side) \;
\If(){$c = \emptyset$}{
  $n_{new}$ \assign $\mathcal Z_\manifold(\dpth(v_i)+1, x_{new})$ ;
  $v_{new}$ \assign $(x_{new}\,,\, n_{new}) $\;
  $\mathcal V$ \assign $\mathcal V \cup v_{new}$ ;
  $\mathcal E^{\side}$ \assign $\mathcal E^{\side} \cup (v_i, v_{new})$ \;
  \Return
}
\insert($x_{new}$, $c$) \;
}
\end{algorithm}
\vspace{-10mm}
\section{Complexity Analysis}\label{sec:analysis}

In this section, we discuss the expected asymptotic complexity of $m$-nearest-neighbor queries in a \kdtree built according to Algorithm \ref{alg:batch}.
The complexity of the nearest-neighbor query procedure (Algorithm \ref{alg:query}) is measured in terms of the number $n_v$ of vertices examined.
Let $\mathcal B(q, d_m)$ be the ball containing the $m$ nearest neighbors to $q$.
For the soundness of the algorithm, all the vertices contained in this ball must be examined, and therefore, all the buckets overlapping with it. As we recall from Section \ref{sec:kdtrees}, buckets are associated with leaves, and therefore the algorithm will visit as many leaves as the number of such buckets, formally: $n_l \in \Theta (|\beta|)$, where $ \beta \coloneqq \{b \in \mathscr B_{\mathcal T}\,|\, b\,\cap\,\mathcal B(q, d_m) \neq \emptyset\}$ and $n_l$ denotes the number of leaves visited by the algorithm.

If the asymptotic order of visited leaves is known, proofs rely on the following fact: since the batch algorithm guarantees a balanced tree, descending into $n_l$ leaves from the root requires visiting $n_v \in \Theta(n_l \log N)$ vertices, where $N$ is the cardinality of the database $X$. Thus the expected query complexity is given by:
\begin{equation}\label{eq:node_leaves}
  \mathbb E(n_v) \in \Theta [\mathbb E(n_l)\log N].
\end{equation}

Lemma \ref{lem:classic} reviews the seminal result, originally from \cite{friedman1977algorithm}, that the expected asymptotic complexity for a Minkowski (i.e., L-p) distance is logarithmic.
Our main theoretical contribution is the negative result described in Theorem \ref{thm:broken_classic}, where we reveal that the expected complexity for sub-Riemannian metrics is in fact super-logarithmic.
In the sequel, assume that data points are randomly sampled from $\manifold$ with probability distribution $p(x)$.

\begin{lemma}[\cite{friedman1977algorithm}]\label{lem:classic}
If the distance function used for the nearest-neighbor query is induced by a $p$-norm, then the expected complexity of the nearest-neighbor search on a \kdtree built with Algorithm \ref{alg:batch} is $O(\log(N))$ as $N\rightarrow\infty$.
\end{lemma}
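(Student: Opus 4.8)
The plan is to reduce the statement, through the relation $\mathbb{E}(n_v)\in\Theta[\mathbb{E}(n_l)\log N]$ of \eqref{eq:node_leaves}, to the claim that the expected number of buckets met by the query ball is bounded independently of $N$: $\mathbb{E}(|\beta|)=O(1)$, where $\beta=\{b\in\mathscr{B}_{\mathcal T}\,|\,b\cap\mathcal B(q,d_m)\neq\emptyset\}$. Since the excerpt already records $n_l\in\Theta(|\beta|)$, this immediately gives $\mathbb{E}(n_v)=O(\log N)$.

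First I would record the structural properties of the tree built by Algorithm \ref{alg:batch}. Because every internal vertex splits its point set at the \emph{median} of some cardinal direction, the tree is balanced: it has depth $\Theta(\log N)$, carries $\Theta(N)$ buckets, every bucket contains $O(1)$ data points, and hence every bucket has probability mass $\Theta(1/N)$ under $p$. Second, I would use the \texttt{maxRange} rule to control bucket \emph{shape}: since each split is taken along the direction of currently-maximal spread, no side of a bucket can be much longer than the others (otherwise the parent would have split that side), so along any root-to-leaf path each of the $k$ cardinal directions is split $\Theta((\log N)/k)$ times and the buckets are, asymptotically, boxes of bounded aspect ratio --- ``essentially cubes'' of side $\Theta(N^{-1/k})$ in regions where $p$ is locally approximately constant. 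This is where the hypothesis that $d$ is (strongly equivalent to) a $p$-norm is used, and it is the technical heart of the argument, following \cite{friedman1977algorithm}: one works in coordinates in which $p$ is smooth and bounded away from $0$, so ``equal mass'' translates into ``equal volume up to constants''.

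Next I would estimate the query ball. Since $\mathcal B(q,d_m)$ contains exactly the $m$ nearest data points and $m$ is a fixed constant, its $p$-mass is $\Theta(m/N)=\Theta(1/N)$, of the same order as a single bucket; and, $d$ being a $p$-norm, $\mathcal B(q,d_m)$ is a convex body of bounded aspect ratio, so concentration of the $m$-th nearest-neighbor distance around $(m/(N\,p(q)\,V_k))^{1/k}$ (with $V_k$ the unit-ball volume) makes its diameter $\Theta(N^{-1/k})$, comparable to one bucket side. The last step is a routine packing estimate: if the ball has radius $r$ and buckets have side $\ell$ with $r/\ell=\Theta(m^{1/k})$, the ball overlaps at most $(\lceil r/\ell\rceil+1)^k=O((m^{1/k}+1)^k)=O(1)$ of them. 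Hence $\mathbb{E}(|\beta|)=O(1)$, and combining with \eqref{eq:node_leaves} yields $\mathbb{E}(n_v)=O(\log N)$.

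The main obstacle is the middle paragraph: rigorously showing that \texttt{maxRange}-driven median splits produce buckets of bounded aspect ratio and mass $\Theta(1/N)$ when the data are drawn from a non-uniform $p$, together with the accompanying concentration of the $m$-th nearest-neighbor distance. Both require $p$ smooth and positive near $q$ and constitute the quantitative content of \cite{friedman1977algorithm}; everything else is bookkeeping on top of \eqref{eq:node_leaves}.
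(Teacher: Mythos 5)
Your proposal is correct and follows essentially the same route as the paper, which itself only sketches the argument and defers the full details to Friedman et al.\ \cite{friedman1977algorithm}: both reduce via \eqref{eq:node_leaves} to showing $\mathbb{E}(n_l)=O(1)$, both invoke the batch algorithm's median splits and \texttt{maxRange} rule to get asymptotically hyper-cubical buckets of side $\Theta(N^{-1/k})$, both use the expected-volume estimate \eqref{eq:e_mball_volume} to bound the query ball to the same scale, and both finish with a constant packing bound. The one place you are slightly more explicit than the paper's summary is in naming the $p$-norm hypothesis as what makes $\mathcal{B}(q,d_m)$ a bounded-aspect-ratio convex body comparable to the cubical buckets, which is exactly the contrast the paper goes on to exploit in Theorem~\ref{thm:broken_classic}.
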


For the detailed proof, we suggest reading the original work by Friedman et al. \cite{friedman1977algorithm}, while here we report key facts used in our next result.
In \cite[pg. 214]{friedman1977algorithm}, it is shown that:
\begin{equation}\label{eq:e_mball_volume}
  \mathbb E[\mathrm{vol}(\mathcal B(q, d_m))] \approx  \frac{m}{(N+1)}\frac{1}{p(q)},
\end{equation}
with no assumptions on the metric and the probability distribution $p(x)$.

In a batch-built \kdtree, the expected asymptotic shape of the bucket $\goth b_q$ containing the query point $q$ is assumed hyper-cubical.
From equation (\ref{eq:e_mball_volume}) with $m=1$, $\mathbb E(\mathrm{vol}(\goth b_q)) \approx \frac{1}{(N+1)}\frac{1}{p(q)}$.
Thus, hyper-cubes in a neighborhood of $q$ have sides of expected length $E(l_q)=[(N+1)p(q)]^{-1/k}$.
Using these facts, it is then shown that the expected number of visited leaves is asymptotically constant with the size of the database, $E(n_l) \in \Theta(1)$ and therefore the average query complexity is logarithmic, as per equation (\ref{eq:node_leaves}).

\begin{theorem} \label{thm:broken_classic}
Let the distance function used for the nearest-neighbor query be a sub-Riemannian metric on a smooth connected manifold $\manifold$ with a bracket-generating horizontal distribution of dimension $h$ and weights $\{ w_i\}$.
Then the expected complexity of the nearest-neighbor query on a \kdtree built with Algorithm \ref{alg:batch} is $\Theta(N^p\log(N))$ as $N\rightarrow\infty$, where the expression for $p$ is:
\begin{equation}\label{eq:complexity}
	 p = \hspace{-2mm} \sum_{w_i \leq W/k} \left(\frac{1}{k}-\frac{w_i}{W}\right), \;\;\text{with }\; W\coloneqq\sum_i w_i\,.
\end{equation}

\end{theorem}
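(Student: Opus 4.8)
The plan is to follow the skeleton of Friedman et al.~\cite{friedman1977algorithm}'s analysis recalled above --- \eqref{eq:e_mball_volume}, the hyper-cubical bucket assumption, and \eqref{eq:node_leaves} --- but to replace its isotropic volume and shape estimates with the anisotropic bounds supplied by the ball-box Theorem. As before, write $\ell_q = [(N+1)p(q)]^{-1/k} = \Theta(N^{-1/k})$ for the expected bucket side. The first step is to pin down the query radius $d_m$: by Theorem~\ref{thm:ball_box}, $\mathrm{vol}(\mathcal B(q,\epsilon)) \in \Theta\bigl(\prod_i \epsilon^{w_i}\bigr) = \Theta(\epsilon^{W})$ for $\epsilon < \epsilon_0$, so combining this with \eqref{eq:e_mball_volume} gives $d_m^{W} \in \Theta\bigl(\tfrac{m}{(N+1)\,p(q)}\bigr)$, i.e., $d_m \in \Theta(N^{-1/W})$; in particular $d_m \to 0$, so the ball-box estimates are valid once $N$ is large.

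The core of the argument is to estimate $|\beta|$, the number of (hyper-cubical, side $\Theta(\ell_q)$) buckets meeting $\mathcal B(q,d_m)$, since $n_l \in \Theta(|\beta|)$. By Theorem~\ref{thm:ball_box} we have $\mathrm{Box}^w_i(q,d_m) \subseteq \mathcal B(q,d_m) \subseteq \mathrm{Box}^w_o(q,d_m)$, and both enclosing boxes are rectangular in the privileged frame $\{y_i(q)\}$ with side lengths $\Theta(d_m^{w_i}) = \Theta(N^{-w_i/W})$ along $y_i$. Passing from the privileged frame to the cardinal axes used by Algorithm~\ref{alg:batch} is a fixed linear change of coordinates whose condition number is bounded on a neighborhood of $q$, so up to constant factors we may count axis-aligned cubes meeting an axis-aligned box of those side lengths. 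A lattice-covering estimate --- the $\ell_q$-cubes meeting a convex body $K$ are exactly those whose centre lies in $K \oplus [-\tfrac{\ell_q}{2},\tfrac{\ell_q}{2}]^k$, and that Minkowski sum is ``$\ell_q$-fat'', so the number of grid centres it contains is $\Theta\bigl(\mathrm{vol}(K \oplus [-\tfrac{\ell_q}{2},\tfrac{\ell_q}{2}]^k)\,\ell_q^{-k}\bigr)$ --- then yields
\begin{equation}\label{eq:bucket_count}
|\beta| \;\in\; \Theta\!\left(\prod_{i=1}^{k} \frac{d_m^{w_i} + \ell_q}{\ell_q}\right) \;=\; \Theta\!\left(\prod_{i=1}^{k} \max\!\Bigl(1,\; \frac{d_m^{w_i}}{\ell_q}\Bigr)\right),
\end{equation}
with the inner box furnishing the matching lower bound. (A box that is thin along some direction still meets many cubes: the Minkowski sum fattens that direction to width $\ell_q$, which merely contributes a factor $1$ to the product.)

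Substituting $d_m^{w_i}/\ell_q = \Theta(N^{1/k - w_i/W})$ into \eqref{eq:bucket_count}, a factor exceeds $1$ asymptotically exactly when $w_i < W/k$, hence
\begin{equation}\label{eq:p_exponent}
|\beta| \;\in\; \Theta\!\left(N^{\,\sum_{w_i \le W/k}\left(\tfrac1k - \tfrac{w_i}{W}\right)}\right) \;=\; \Theta(N^{p}),
\end{equation}
with $p$ exactly as in \eqref{eq:complexity} (a term with $w_i = W/k$ contributes $0$, so the weak inequality is immaterial). Taking expectations, $\mathbb E(n_l) \in \Theta(N^p)$, and since Algorithm~\ref{alg:batch} produces a balanced tree, \eqref{eq:node_leaves} gives $\mathbb E(n_v) \in \Theta(\mathbb E(n_l)\log N) = \Theta(N^p \log N)$. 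As a check, when $\horzdist = T\manifold$ (the Riemannian case, $w_i \equiv 1$, $W = k$) every factor in \eqref{eq:bucket_count} is $\Theta(1)$, so $p = 0$ and we recover Lemma~\ref{lem:classic}; more generally a short computation gives $p \in [0,1)$, matching the abstract.

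I expect the delicate step to be \eqref{eq:bucket_count}: one must replace the sub-Riemannian ball by its ball-box sandwich, reconcile the privileged-direction geometry of that sandwich with the cardinal-axis geometry of the \kdtree, and carry out the lattice count for a box that is extremely anisotropic --- comparatively long along the $h$ horizontal directions and increasingly flat along the bracket-generated ones --- taking care that the flat directions each contribute only a bounded factor rather than being missed altogether, and that the discreteness of the cube grid does not spoil the lower bound. Everything else is bookkeeping layered onto the expected-value framework of Friedman et al., whose hyper-cubical-bucket and balanced-tree hypotheses we inherit.
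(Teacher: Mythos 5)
Your proof is correct and follows essentially the same route as the paper: both establish $\mathbb E(d_m) \in \Theta(N^{-1/W})$ from the ball-box volume estimate, take $\ell_q \in \Theta(N^{-1/k})$ for the expected bucket side, count buckets meeting the ball via its bounding box as $\Theta\bigl(\prod_{i=1}^k \max(1,\, d_m^{w_i}/\ell_q)\bigr)$, and read off $p$ from the factors with positive exponent before applying the balanced-tree factor $\log N$ via equation~(\ref{eq:node_leaves}). The one place you are somewhat more careful is the lattice count: the paper multiplies per-dimension segment--hypercube intersection counts, implicitly treating the weighted box as axis-aligned with the \kdtree grid, whereas your Minkowski-sum estimate is phrased so as to be insensitive to the rotation between the privileged frame $\{y_i(q)\}$ and the cardinal axes --- a subtlety the paper glosses over but which does not change the asymptotic order.
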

\begin{proof}

For $N$ large enough, Theorem \ref{thm:ball_box} ensures the existence of inner and outer bounding boxes to the sub-Riemannian ball $\mathcal B(q, d_m)$.

From equations (\ref{eq:ball_box}) and (\ref{eq:e_mball_volume}), we get:%
\begin{equation}\label{eq:box_bound}
 \mathbb E(d_m)^{W} \cdot \prod_{i=1}^k c_i \leq  \frac{1}{N+1}\frac{1}{p(q)} \leq \mathbb E(d_m)^W \cdot \prod_{i=1}^k C_i,
\end{equation}
where $W \coloneqq \sum_i w_i$.
Therefore, the expected distance to the $m$-th nearest neighbor scales asymptotically as $\mathbb E(d_m)\in \Theta(N^{-1/W})$ for $N\rightarrow \infty$.
Recall that a bucket $\goth b_q$ has sides of expected length $l_{q}=\sqrt[k]{1/\left[(N+1)p(q)\right]}$ or equivalently, $l_{q} \in \Theta(N^{-1/k})$.
We are now interested in the asymptotic order of the number of buckets overlapping with a weighted box of size $d_m$.
Let $n_h(d, l_q)$ the number of hypercubes intersected by a segment of length $d$ embedded in a grid of $\mathbb{R}^k$ with side $l_q$.
It can be shown that $\left\lceil (\sqrt{k})^{-1} (d/l_{q}) \right\rceil \leq n_h(d, l_q) \leq k + \sqrt k \lceil d/l_q\rceil$. Then, asymptotically, $n_h(d, l_q) \in \Theta(\lceil d/l_q \rceil)$ as $(d, l_q) \rightarrow 0$.
Since a box has $k$ orthogonal sides, each with expected length $\mathbb E(d_m)^{w_i}$, the expected number of visited buckets, is:
\begin{equation}\label{eq:proof1}
  \mathbb E(n_l) \in
  \Theta\left(\prod_{i=1}^k \left\lceil \frac{\mathbb E(d_m)^{w_i}}{l_{q}} \right\rceil \right) = \Theta\left(\prod_{i=1}^k \left\lceil  N^{\frac{1}{k}-\frac{w_i}{W}}\right\rceil \right).
\end{equation}

In the latter product, only the factors with exponent $> 0$ contribute to the complexity of the algorithm, while the other terms tend to $1$, as $N\rightarrow \infty$.
In other words, the query complexity is determined only by low order Lie brackets up until $w_i \leq W/k$. In fact, along the direction of higher order Lie brackets, reachable sets shrink to zero faster than the side of a \kdtree bucket. Following up from equation (\ref{eq:proof1}), we get:
\begin{equation}
  \mathbb E(n_l) \in \Theta\left(\prod_{w_i \leq W/k}  N^{\frac{1}{k}-\frac{w_i}{W}}\right) = \Theta\left(N^p\right), \quad \; p = \hspace{-2mm} \sum_{w_i \leq W/k} \left(\frac{1}{k}-\frac{w_i}{W}\right).
\end{equation}
From equation (\ref{eq:node_leaves}) it follows that the expected complexity of the query algorithm is given by  $\mathbb E(n_v) \in \Theta(N^p \log N)$ as $N \rightarrow \infty$. \qed
\end{proof}

\paragraph{Remark.} For holonomic systems, the query complexity is logarithmic, in accordance with Lemma \ref{lem:classic}.
In fact, for holonomic systems, $w_1 = w_2 = \dots = w_k = 1$, then $W = k$ and $p = 0$ from equation (\ref{eq:complexity}).

\paragraph{Remark.} The query complexity is always between logarithmic and linear. Since $W \geq k$ by definition, for all $i$ such that $w_i \leq W/k$, one can state $0 < w_i/W \leq 1/k \leq 1 $. It follows that $0 \leq \frac{1}{k}-\frac{w_i}{W} < \frac{1}{k} $.
Then $p$ can be bounded with:
\begin{equation}
  0 \leq p = \hspace{-2mm}\sum_{w_i \leq W/k} \left(\frac{1}{k}-\frac{w_i}{W}\right) < \hspace{-2mm} \sum_{w_i \leq W/k} \left(\frac{1}{k}\right) \leq k \cdot \frac{1}{k} = 1 \;\; \Rightarrow \;\; p \in [0, 1).
\end{equation}

\begin{example-rs}

For the Reeds-Shepp car, $k=\mathrm{dim}[SE(2)] = 3$, $w_{\hat f} = w_{\hat \theta} = 1$, $w_{\hat l} = 2$ and $W = w_{\hat f} + w_{\hat \theta} + w_{\hat l} = 4.$ Only $w_{\hat f}$ and $w_{\hat \theta}$ satisfy $w_i \leq W/k$, therefore, according to Theorem \ref{thm:broken_classic}, the expected query complexity of a batch kd-tree is $\Theta\left(\sqrt[6]{N} \log N\right)$.
We confirm this prediction with experiments in section~\ref{sec:experiments}.
\end{example-rs}

\section{The Lie splitting strategy}\label{sec:lie_split}
The basic working principle of \kdtrees is the ability to discard subtrees without loss of information during a query.
For each visited node $v = (x,n)$, the algorithm checks whether the current biggest ball in the queue, $\mathcal B(q, d_m)$ intersects $\goth h(x,n)$.
If such an intersection exists, the algorithm visits both children of $v$, otherwise one child is discarded and so is the entire subtree rooted in it.
To reduce complexity, it is thus desirable that during query, a \kdtree presents as few ball-hyperplane intersections as possible.
 \begin{figure}[!ht]
   \centering\vspace{-0.8cm}
   \subfigure[Classic \kdtree construction \label{fig:kdclassic}]{\includegraphics[page=8, width=.4\textwidth]{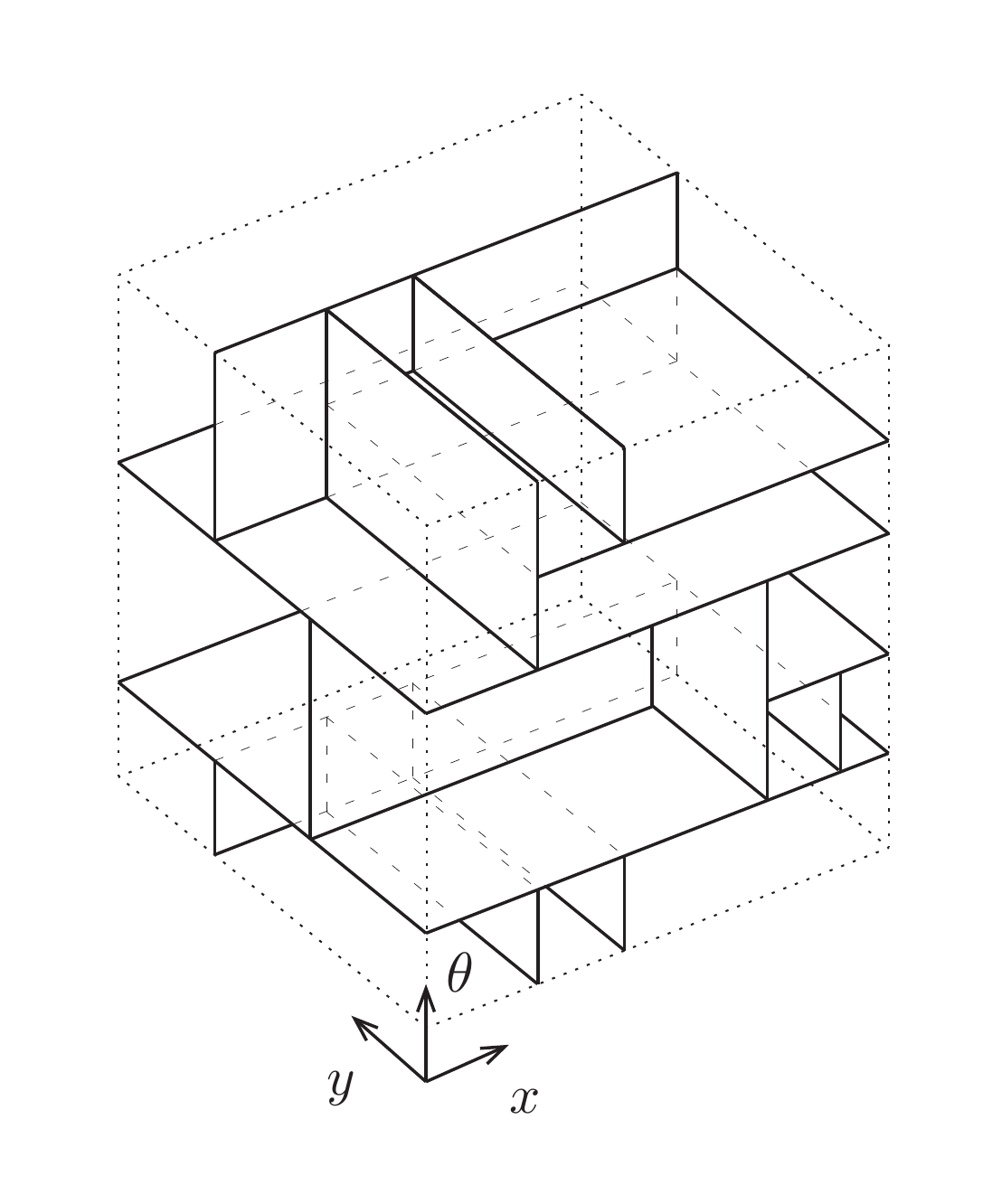}} \hfil
   \subfigure[Lie \kdtree splitting strategy \label{fig:kdlie}]{\includegraphics[page=12, width=.4\textwidth]{liesplitting_seq.pdf}}
   \caption{\label{fig:liesplitting_seq} Qualitative comparison of a classic \kdtree with its counterpart built with the proposed \emph{Lie splitting strategy}. For nonholonomic systems, reachable sets (red, blue, green) are elongated along configuration-dependent \emph{privileged directions}. The smaller the sets, the more pronounced their aspect ratios. The Lie splitting strategy adapts the hyperplanes locally to the balls and decreases the expected number of ball-hyperplane intersections, thus reducing the expected asymptotic query complexity.}
 \end{figure}

In the classic splitting strategy, hyperplanes are chosen cycling through a globally defined set of cardinal axes.
However, we have shown that nonholonomic systems have a set of locally-defined \emph{privileged axes} and that the reachable sets have different infinitesimal orders along each of them.
When the metric comes from a nonholonomic system, the hyperplanes in a classic \kdtree are not aligned with reachable sets and the buckets have different asymptotic properties than reachable sets. This can make intersections frequent, as shown in figure \ref{fig:kdclassic}.

Ideally, to minimize the number of ball-hyperplane intersections, the buckets in a \kdtree should \emph{approximate the bounding boxes for the reachable sets} of the dynamical system, as depicted in figure \ref{fig:kdlie}.
To achieve this, we propose a novel splitting rule, named the \emph{Lie splitting strategy}, which exploits the differential geometric properties of a system and the asymptotic scaling of its reachable sets.\vspace{1mm}
The Lie splitting strategy is based on the following two principles:\vspace{-2mm}
\begin{enumerate}
  \item The splitting normal associated with each data point $x_i$ is along one of the privileged axes in that point, i.e., $\mathcal Z_\manifold(d, x_i) \in \{y_1(x_i), y_2(x_i) \dots y_k(x_i)\}$.
  \item The buckets of the \kdtree, in expectation, scale asymptotically according to the weighted box along all privileged axes as $t \rightarrow 0$.
  Formally, for all $q \in \manifold$ and for all pairs of privileged axes $y_i(q),y_j(q)$,
  \begin{equation}\label{eq:lie_asymptotic_bucket}
    \mathbb E\left[\range_{y_i(q)}(\goth b_q)\right]^{w_j} \in \Theta\left( \mathbb E\left[\range_{y_j(q)}(\goth b_q)\right]^{w_i} \right).
  \end{equation}
\end{enumerate}
Requirement 2 prescribes the asymptotic behavior of the sequence $\mathcal Z_\manifold$ as $d \rightarrow \infty$, which can be formalized as follows:
let $n_i(d) = \left|\{n < d: \mathcal Z_\manifold(n, x) = y_i(x)\}\right|$, i.e., the total number of splits in the sequence $\mathcal Z_\manifold$ along axis $y_i$ before index $d$.

As $d \rightarrow \infty$, the expected bucket size along $y_i$ after $n_i(d)$ splits has asymptotic order $\mathbb E\left[\range_{y_i(q)}(\goth b_q)\right] \in \Theta[e^{-n_i(d)}]$.
Then, in terms of the number of splits, equation (\ref{eq:lie_asymptotic_bucket}) yields:

\begin{equation}\label{eq:lie_splitting}
  n_i(d)\cdot w_j \sim n_j(d) \cdot w_i \text{ for all } i,j \in [1, \dots k] \text{ as } d\rightarrow \infty.
\end{equation}

Simply put, each privileged direction should be picked as a splitting normal with a frequency proportional to its weight. Note that this is only relevant \emph{asymptotically}, i.e., as $d\rightarrow\infty$.

\begin{example-rs}
  For the Reeds-Shepp vehicle, a valid Lie splitting sequence is:
\begin{equation}
  \mathcal Z_{RS}^{Lie}(d, x) \coloneqq \left\{
  \begin{array}{ll}
    \hat l(x), & \text{if} \; \; d\equiv0\,\text{ or }\, d\equiv 1 \quad(\mathrm{mod }\, 4)  \\
    \hat f(x), & \text{if} \; \; d\equiv2 \quad(\mathrm{mod }\, 4)  \\
    \hat \theta(x), &  \text{if} \; \; d\equiv3 \quad(\mathrm{mod }\, 4)   \\
  \end{array}\right.
\end{equation}
It is easy to verify that this satisfies equation (\ref{eq:lie_splitting}). In fact, lateral splits occur in the sequence twice as often as longitudinal and lateral splits.
\end{example-rs}
\section{Experimental results}\label{sec:experiments}

In this section, we validate our results and compare the performance of our algorithms with different methods from widely used open-source libraries.
Query performances are averaged over 1000 randomly drawn query points. The same insertion and query sequences are used across all the analyzed algorithms and generated from a uniform distribution.
\paragraph{Experiment 1.}

In this experiment, we confirm the theoretical contributions presented in Section \ref{sec:analysis}.
In figure \ref{fig:ex1} we show the average number of leaves visited when querying a batch \kdtree with Euclidean (blue) and Reeds-Shepp metrics (solid red). While the blue curve settles to a constant value, in accordance with Lemma~\ref{lem:classic}, the red curve exhibits exponential growth. When normalizing this curve by $\sqrt[6] N$ (dashed red), we observe a constant asymptotic behavior, consistent with the rate determined by Theorem~\ref{thm:broken_classic}.
For comparison, we report the average time for Euclidean queries on an incremental \kdtree (black), which tends to visit more vertices than its batch, balanced counterpart (blue).
\begin{figure}[!htb]\vspace{-6mm}\hspace{-4mm}
  \subfigure[\label{fig:ex1} Average leaves visited (query)]{\includegraphics[width=.492\textwidth]{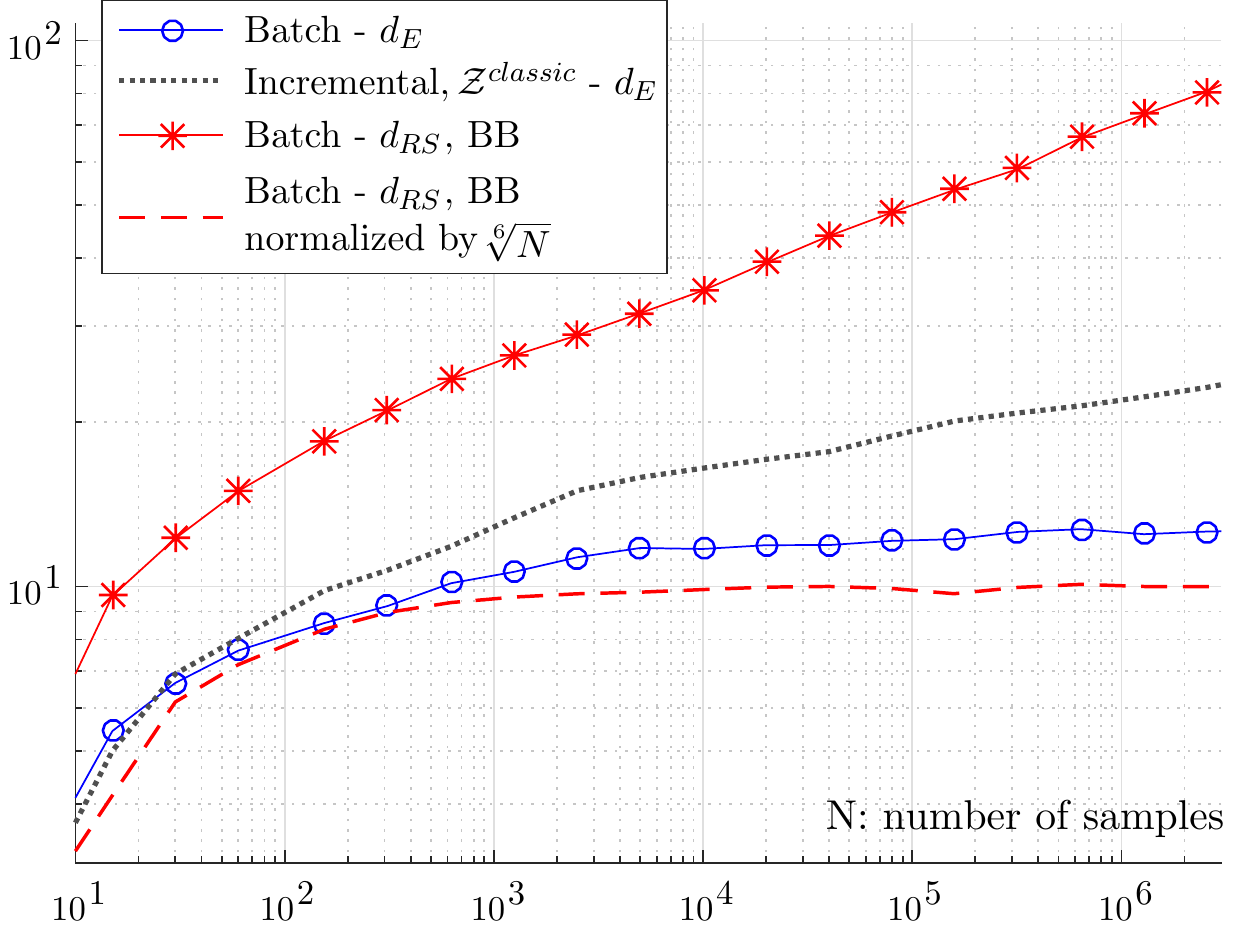}} \hfill
  \subfigure[\label{fig:ex2t} Total distance evaluations]{\includegraphics[page=1,width=.52\textwidth]{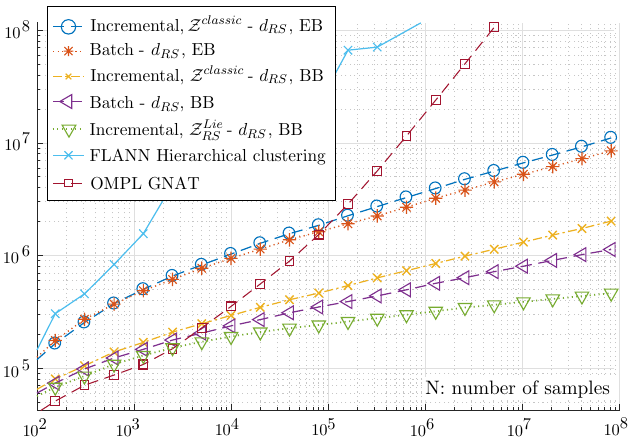}} \\
  \subfigure[\label{fig:ex2b} Build time $(\mu s)$]{\hspace{-3mm}\includegraphics[page=2,height=.276\textwidth]{exp2}} \hfill
  \subfigure[\label{fig:ex2q} Average query time $(\mu s)$]{\includegraphics[page=3,height=.276\textwidth]{exp2}} \\
  \vspace{-3mm}
  \caption{(a) \emph{Experiment 1}: Average number of leaves visited during Euclidean (blue) and Reeds-Shepp (red, solid) queries of a batch \kdtree. (b-d) \emph{Experiment 2}: Performance of different algorithms. In the legends, $d_E$ and $d_{RS}$ indicate Euclidean and Reeds-Shepp queries, $EB$ and $BB$ indicate Euclidean Bound and outer Box Bound for ball-hyperplane intersections, $\mathcal Z^{classic}$ and  $\mathcal Z^{Lie}_{RS}$ indicate the splitting sequence. \vspace{-4mm}}
\end{figure}
\vspace{-2mm}
\paragraph{Experiment 2.}
In this experiment (figures \cref{fig:ex2t,fig:ex2b,fig:ex2q}), we plot the total number of distance evaluations and the running times observed with different combinations of build and query algorithms.
Figure \ref{fig:ex2q} reveals that the proposed outer Box Bound (BB) ball-hyperplane intersection method~(yellow, purple, green) reduces the query time significantly as compared to Euclidean bounds~(EB)~(blue, red), in accordance with our predictions in Section \ref{sec:query}.

Additionally, Lie splitting (green) further improves query time, as compared with the classic splitting (yellow). The corresponding incremental Lie \kdtree
also outperforms a classic batch-built one (purple), guaranteed to be balanced.

More important speedups emerge when comparing the proposed \kdtrees with different techniques, such as Hierarchical Clustering from FLANN and Geometric Near-neighbor Access Tree (GNAT) from OMPL. Interestingly, off-the-shelf implementations of \kdtrees offered by FLANN and other tools are unusable with nonholonomic metrics altogether, since they are limited to distances of the form of equation (\ref{eq:accum_dist}). Therefore a comparison is not possible.

All the tested \kdtrees visibly outperform Hierarchical Clustering~(cyan) both in build time and query time. In contrast, GNAT offers competitive query times. However, its insertion is $\sim 100 \times$ slower than incremental k-d trees, since a significant number of distances are evaluated in the build phase, while \kdtrees only evaluate distances during query. This is reflected in a noticeably higher asymptotic rate of distance evaluations, revealed in figure \ref{fig:ex2t}.

\vspace{-1mm}
\section{Conclusion}
Motivated by applications in sampling-based motion planning, we investigated \kdtrees for efficient nearest-neighbor search with distances defined by the length of paths of controllable nonholonomic systems. We have shown that for sub-Riemannian metrics, the query complexity of a classic batch-built \kdtree is $\Theta(N^p \log N)$, where $p \in [0,1)$ depends on the properties of the system.
In addition, we have proposed improved build and query algorithms for \kdtrees that account for differential constraints. The proposed methods proved superior over classic ones in numerical experiments carried out with a Reeds-Shepp vehicle.

Future work will analyze whether logarithmic complexity is achieved for nonholonomic systems. In addition, the proposed algorithms are exact and rely on explicit distance evaluations. Since distances cannot be generally computed in closed form, we are interested in investigating approximate nearest-neighbor search algorithms with provable correctness bounds that do not require explicit distance computations.

\bibliographystyle{abbrv}
\bibliography{references}\vspace{-5mm}

\end{document}